\newtheorem{theorem}{Theorem}
\newtheorem{remark}{Remark}
\definecolor{LightBlue}{RGB}{247, 252, 255}
\definecolor{LightRed}{RGB}{254, 248, 248}
\definecolor{ggreen}{HTML}{cc0200}
\definecolor{gred}{HTML}{4C9F26}
\definecolor{mypurple}{RGB}{97,90,183}
\definecolor{red}{HTML}{c21313} 
\definecolor{mygreen}{RGB}{198,220,195}
\definecolor{myred}{RGB}{135,3,3}
\newcommand{\uaa}[1]{\scriptsize\textcolor{ggreen}{\footnotesize $\uparrow$}{\color{ggreen}#1}}
\newcommand{\daa}[1]{\scriptsize\textcolor{gred}{\footnotesize $\downarrow$}{\color{gred}#1}}
\definecolor{deepgreen}{RGB}{0,100,0}
\begin{document}

\title{GRIT: Graph-Regularized Logit Refinement for Zero-shot Cell Type Annotation}



\author{Tianxiang Hu}
\affiliation{%
  \institution{Zhejiang University}
  \city{Hangzhou}
  \country{China}}
\email{tianxiang.23@intl.zju.edu.cn}

\author{Chenyi Zhou}
\affiliation{%
  \institution{Zhejiang University}
  \city{Hangzhou}
  \country{China}}
\email{chenyi.22@intl.zju.edu.cn}

\author{Jiaxiang Liu}
\affiliation{%
  \institution{Guangdong Institute of Intelligence Science and Technology}
  \city{Hengqin}
  \country{China}}
\email{liujiaxiang@gdiist.cn}

\author{Jiongxin Wang}
\affiliation{%
 \institution{Zhejiang University}
 \city{Hangzhou}
 \country{China}}
\email{jiongxin.23@intl.zju.edu.cn}

\author{Ruizhe Chen}
\affiliation{%
  \institution{Zhejiang University}
  \city{Hangzhou}
  \country{China}}
\email{ruizhe.21@intl.zju.edu.cn}

\author{Haoxiang Xia}
\affiliation{%
  \institution{Zhejiang University}
  \city{Hangzhou}
  \country{China}}
\email{haoxiang.22@intl.zju.edu.cn}

\author{Gaoang Wang}
\affiliation{%
  \institution{Zhejiang University}
  \city{Hangzhou}
  \country{China}}
\email{gaoangwang@intl.zju.edu.cn}

\author{Jian Wu}
\affiliation{%
  \institution{Zhejiang University}
  \city{Hangzhou}
  \country{China}}
\email{jianwu@intl.zju.edu.cn}

\author{Zuozhu Liu}
\affiliation{%
  \institution{Zhejiang University}
  \city{Hangzhou}
  \country{China}}
\email{zuozhuliu@intl.zju.edu.cn}

\renewcommand{\shortauthors}{Hu et al.}


\begin{abstract}
Cell type annotation is a fundamental step in the analysis of single-cell RNA sequencing (scRNA-seq) data. In practice, human experts often rely on the structure revealed by principal component analysis (PCA) followed by $k$-nearest neighbor ($k$-NN) graph construction to guide annotation. While effective, this process is labor-intensive and does not scale to large datasets. Recent advances in CLIP-style models offer a promising path toward automating cell type annotation. By aligning scRNA-seq profiles with natural language descriptions, models like LangCell enable zero-shot annotation. While LangCell demonstrates decent zero-shot performance, its predictions remain suboptimal. In this paper, we propose a principled inference-time paradigm for zero-shot cell type annotation (GRIT) which bridges the scalability of pre-trained foundation models with the structural robustness relied upon in human expert annotation workflows. Specifically, we enforce local consistency of the zero-shot CLIP logits over the task-specific PCA-based $k$-NN graph. We evaluate our approach on 14 annotated human scRNA-seq datasets from 4 distinct studies, spanning 11 organs and over 200,000 single cells. Our method consistently improves zero-shot annotation accuracy, achieving accuracy gains of up to 10\%. Further analysis showcase the mechanism by which GRIT effectively propagates correct signals through the graph, pulling back mislabeled cells toward more accurate predictions. The method is training-free, model-agnostic, and serves as a simple yet effective plug-in for enhancing zero-shot cell type annotation.
\end{abstract}

\keywords{Cell type annotation, scRNA-seq, CLIP, Zero-shot classification}


\maketitle

\section{Introduction}
Single-cell RNA sequencing (scRNA-seq) technologies have enabled high-resolution profiling of cellular heterogeneity across diverse tissues and biological conditions \cite{pliner2019supervised,jovic2022single}. A critical step in the analysis pipeline is cell type annotation—assigning biologically meaningful labels to individual cells—which forms the basis for downstream interpretation. Traditionally, this process is performed through a semi-automatic workflow \cite{Pollen2015, Chu2016, quan2023annotation, clarke2021tutorial, pasquini2021automated, hu2023cellmarker}, combining dimensionality reduction (e.g., principal  component analysis), clustering algorithms (e.g., Leiden or Louvain), and manual inspection of marker gene expression. However, this approach is time-consuming and increasingly impractical as scRNA-seq datasets grow in size and complexity.

Recent advances in deep learning have enabled the development of powerful representation learning frameworks for scRNA-seq data. Models such as scBERT \cite{yang2022scbert}, scGPT \cite{cui2024scgpt}, Geneformer \cite{cui2024geneformer}, and scFoundation \cite{hao2024large} are pretrained on millions of single-cell profiles and support transfer learning for a wide range of downstream tasks. These foundation models offer a unified and scalable approach to encoding single-cell gene expression into informative embeddings using general-purpose neural architectures. Building on this foundation, LangCell \cite{zhao2024langcell} adopts Geneformer—a high-performing single-cell foundation model—as its cell encoder within a CLIP-style framework that aligns single-cell embeddings with natural language descriptions of cell identities. By training on paired scRNA-seq profiles and cell-type text annotations, LangCell uniquely enables zero-shot cell type annotation. Given the success of CLIP-style and multimodal foundation models in other domains when trained at scale, there is strong reason to believe that such architectures—when further developed and trained on higher quality scRNA-seq and text corpora—can evolve into powerful tools for automating and augmenting expert-level biological annotation.

\begin{figure}[t!]
\centering
\includegraphics[width=0.46\textwidth]{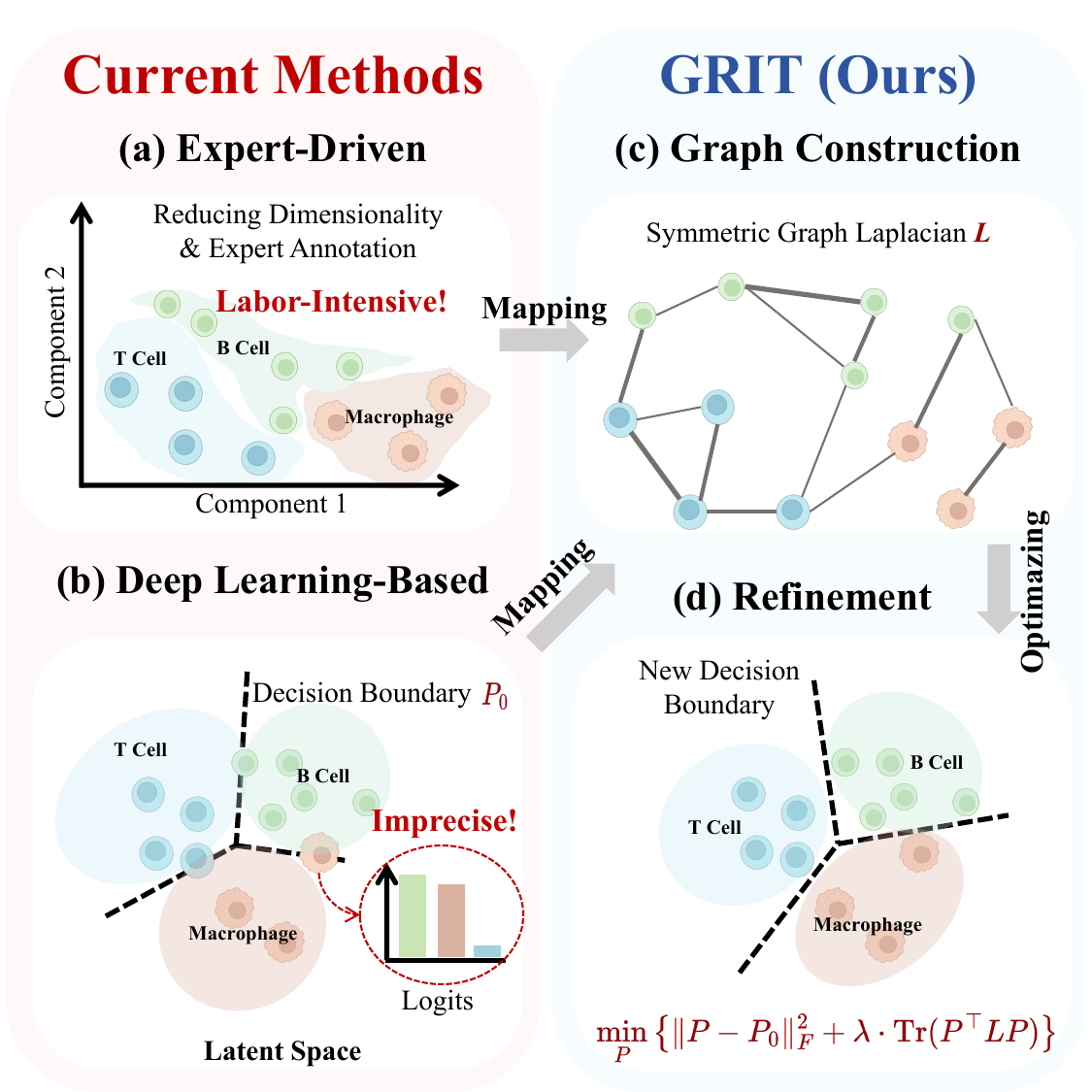}
\caption{GRIT Overview. (\textbf{a},\textbf{b}) Existing approaches rely on expert-driven labeling or deep learning models, which can be labor-intensive or imprecise. (\textbf{c}) Construct a PCA-based $k$-NN graph, with each node initialized by the logits predicted by a deep learning model. (\textbf{d}) GRIT refines these initial predictions by solving a graph-regularized optimization problem that promotes local consistency across the $k$-NN graph.} 
\label{fig:method}
\vspace{-1em}
\end{figure}

\begin{figure}[t!]
\centering
\includegraphics[width=0.46\textwidth]{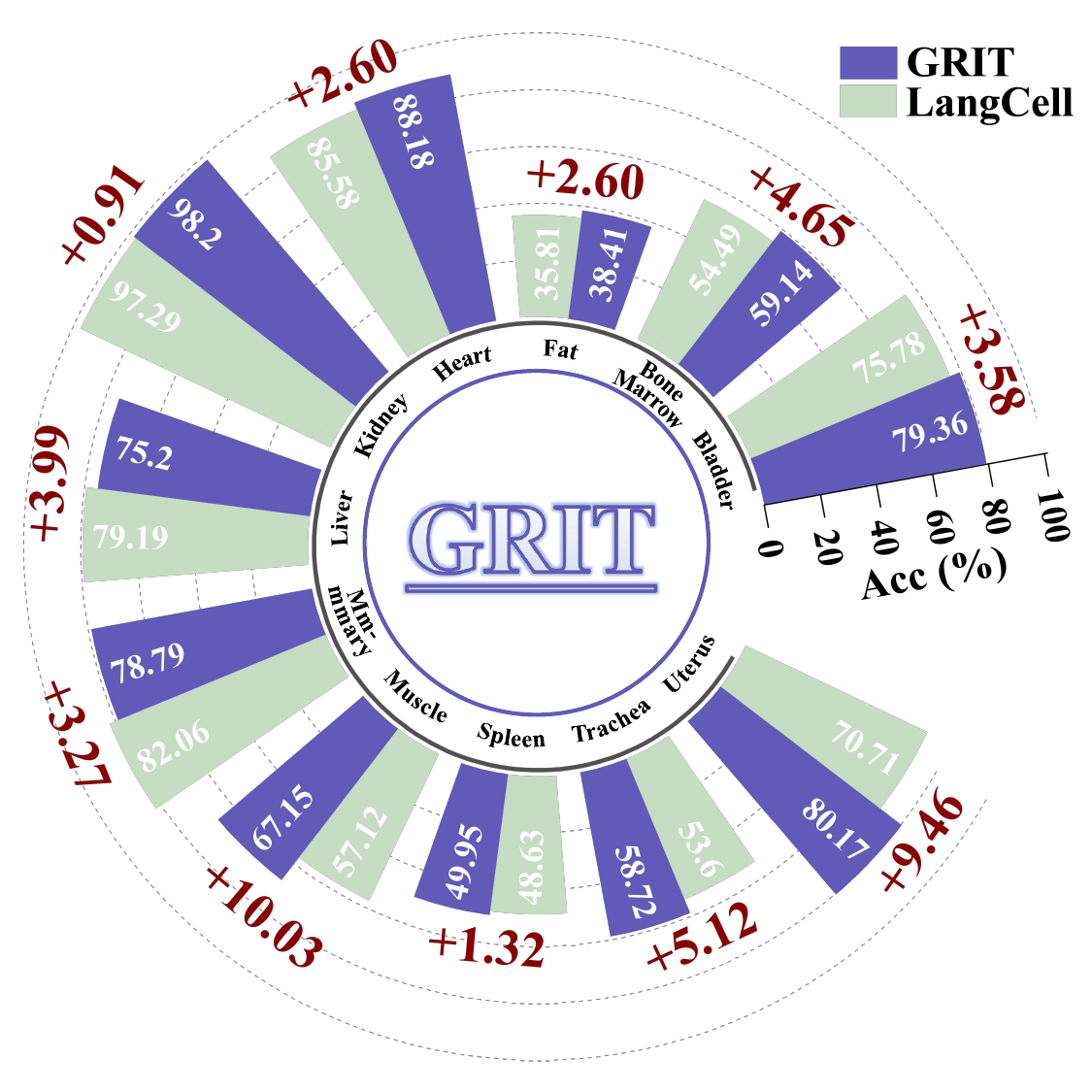}
\caption{Performance of GRIT on zero-shot cell type annotation across 11 organ-specific scRNA-seq datasets. Each segment shows the accuracy achieved by GRIT (\textcolor{mypurple}{blue}) and the baseline LangCell (\textcolor{mygreen}{green}), along with the accuracy gain labeled in \textcolor{myred}{red}. GRIT consistently improves performance over LangCell across all datasets, with accuracy gains up to 10\%. Detailed results and analysis are provided in Section~\ref{sec:main results}.} 
\label{fig:method-performance}
\vspace{-1em}
\end{figure}

An important observation in the context of single-cell annotation is that human experts frequently rely on the structure revealed by the PCA-based $k$-NN graph. This suggests that, for a given scRNA-seq dataset, we already possess a reliable task-relevant structure among cells. While models like LangCell enable zero-shot annotation by aligning single-cell representations with natural language description representations, they are not explicitly trained to preserve this structural information. As a result, the predictions may be locally inconsistent with the $k$-NN graph which serves as an informative structural reference in expert-guided annotation workflows. This limitation motivates a hybrid strategy: refining pre-trained models' zero-shot predictions by enforcing local consistency over the task-relevant $k$-NN graph. In doing so, we aim to combine the scalability provided by the pre-trained model with the structural robustness of the PCA-based $k$-NN graph. It is natural to consider existing methods such as label propagation or graph-based regularization. However, many of these approaches \cite{wang2019knowledge, zhu2003semi, zhou2003learning} require access to at least a subset of ground truth labels which are unavailable in the zero-shot setting. Others \cite{chen2024clustering, velivckovic2018deep, kipf2016semi} focus on enhancing graph neural network training, which may be unnecessary in our case. These limitations call for a label-free, inference-time refinement approach tailored to the zero-shot annotation setting.

In this paper, we propose a graph-regularized logit refinement method (GRIT) for enhancing zero-shot cell type annotation in scRNA-seq data. The core idea is to leverage the scalability of pretrained foundation models while correcting their predictions using domain-specific geometric consistency, without requiring any additional training (\autoref{fig:method}). Concretely, we first apply a CLIP-style model—LangCell in our case—to perform zero-shot annotation on scRNA-seq data, resulting in initial prediction logits. We then construct a PCA-based $k$-NN graph to reflect the underlying structure commonly trusted by human experts. Finally, we solve a graph-regularized optimization problem to refine the logits, encouraging smoothness over the graph while remaining close to the original predictions. This refinement is applied entirely at inference time and serves as a lightweight, principled postprocessing step. We rigorously analyze our approach and validate its effectiveness through extensive experiments on 14 annotated human scRNA-seq datasets, collected from 4 independent studies and spanning 11 organs and over 200,000 single cells. Experimental results show that our refinement consistently improves zero-shot performance, with nearly all cases showing gains in accuracy and macro F1 score. Our main contributions are:

\begin{itemize}
    \item We introduce a principled framework that integrates pretrained foundation models with human expert annotation practices through inference-time graph regularization of prediction logits. Our method is simple, training-free, and effective, serving as a plug-and-play solution for zero-shot cell type annotation.
    
    \item We provide a theoretical proof that graph-regularized optimization can effectively refine predictions, under the condition the initial logits are reasonably accurate.
    
    \item Extensive zero-shot cell type annotation experiments using LangCell across 14 human scRNA-seq datasets demonstrate consistent performance improvements through the application of our graph-regularized logit refinement.
\end{itemize}

\section{Related Work}


\subsection{Representation Learning for scRNA-seq}
The increasing availability of single-cell RNA sequencing (scRNA-seq) data has spurred the development of machine learning approaches for learning meaningful representations from high dimensional gene expression profiles \cite{olsen2018introduction,ziegenhain2017comparative,jovic2022single,papalexi2018single,zhang2021single,van2023applications,slovin2021single,hwang2018single,haque2017practical}. Traditional analysis pipelines often begin with dimensionality reduction techniques such as PCA or t-SNE, followed by clustering and manual marker gene inspection \cite{conard2023spectrum}. While these methods remain effective for visualizing structure and guiding expert annotation, recent work has turned to deep learning—particularly transformer-based architectures—to model complex nonlinear relationships across genes and enhance scalability to large, diverse datasets. Models such as \textsc{scBERT}\cite{yang2022scbert}, \textsc{scGPT}\cite{cui2024scgpt}, \textsc{Geneformer}\cite{cui2024geneformer}, and \textsc{scFoundation}\cite{hao2024large} leverage large-scale pretraining on millions of scRNA-seq profiles to support transfer learning across downstream tasks including cell type annotation.

\subsection{Language-Cell Alignment and Zero-Shot Cell Type Annotation}
Advances in multimodal representation learning aim to align heterogeneous data modalities within a shared semantic space, which is most prominently in the vision--language domain. A representative line of work demonstrates that contrastive or alignment-based pretraining can effectively bridge modality-specific representations with natural language supervision at scale. Concretely, general-purpose multimodal models such as \textsc{CLIP}~\cite{radford2021learning}, \textsc{BLIP}~\cite{li2022blip}, \textsc{LLaVA}~\cite{liu2023visual} and \textsc{Flamingo}~\cite{alayrac2022flamingo}, together with their domain-specific counterparts~\cite{wang2022medclip,stevens2024bioclip,li2023llava,moor2023med,jiang2025hulu}, demonstrate that aligning modality-specific encoders with large language models not only supports effective zero-shot transfer, but also enables instruction following and open-ended multimodal reasoning across diverse tasks.


Inspired by the success of these frameworks, recent work has begun to explore the alignment between cellular measurements and natural language. A representative example is \textsc{LangCell}~\cite{zhao2024langcell}, which introduces a CLIP-style contrastive pretraining framework to align scRNA-seq profiles with curated, ontology-derived textual descriptions. In this framework, \textsc{Geneformer}~\cite{cui2024geneformer} is adopted as the cell encoder, while a BERT-based model is used for text encoding. While LangCell demonstrates promising generalization in zero-shot settings like zero-shot cell type annotations, several limitations remain. First, neither its cell encoder nor the multimodal model itself explicitly leverages graph-based structural information during pretraining. Second, its inference mechanism—like other CLIP-style retrieval paradigms \cite{radford2021learning,liu2024vpl,kpl2025liu}—can be sensitive to suboptimal prompt design and modality biases. We introduce a post-hoc logit refinement approach that exploits graph-structured consistency among cells to enhance prediction accuracy at inference time.



\section{Methodology}

In this section, we present our graph-regularized refinement method for enhancing zero-shot cell type annotation in scRNA-seq data. The core idea is to improve the initial prediction logits produced by a pretrained model via encouraging local consistency among cells, guided by structural information captured in a PCA-based $k$-NN graph. We begin by providing an analysis demonstrating that, when the initial logits are reasonably accurate, applying graph regularization is guaranteed to improve prediction performance. We then introduce the full pipeline of our proposed method, GRIT, for zero-shot cell type annotation.

\subsection{Theoretical Analysis}

\begin{theorem}[Graph Regularized Logit Refinement Improves Predictions]
Given a symmetric graph Laplacian $L \in \mathbb{R}^{n \times n}$ constructed from an adjacency matrix $A \in \mathbb{R}^{n \times n}$, let $P_0 \in \mathbb{R}^{n \times c}$ denote the initial class logits over the $n$ nodes, and let $P^* \in \mathbb{R}^{n \times c}$ denote the ground-truth logits. Consider the following graph-regularized optimization problem:
\[
\hat{P}_\lambda := \arg\min_{P} \left\{ \|P - P_0\|_F^2 + \lambda \cdot \mathrm{Tr}(P^\top L P) \right\}.
\]

Suppose the following condition holds:
\[
\langle P_0 - P^*,\, L P_0 \rangle > 0.
\]

Then there exists a sufficiently small regularization parameter $\lambda > 0$ such that:
\[
\|\hat{P}_\lambda - P^*\|_F^2 < \|P_0 - P^*\|_F^2.
\]
\end{theorem}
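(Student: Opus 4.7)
The plan is to approach this as a first-order perturbation argument in $\lambda$. Since $\hat{P}_\lambda$ depends smoothly on $\lambda$ and reduces to $P_0$ at $\lambda=0$, the squared-distance function $f(\lambda) := \|\hat{P}_\lambda - P^*\|_F^2$ is smooth near $0$ with $f(0) = \|P_0 - P^*\|_F^2$. The condition $\langle P_0 - P^*, L P_0\rangle > 0$ is precisely what we need to make $f'(0)$ strictly negative; then the existence of a sufficiently small $\lambda>0$ with $f(\lambda) < f(0)$ follows from a one-line Taylor argument.

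First I would write out the closed-form minimizer. The objective is strictly convex and quadratic in $P$, so setting the gradient to zero gives $(I + \lambda L)\hat{P}_\lambda = P_0$, i.e.\ $\hat{P}_\lambda = (I + \lambda L)^{-1} P_0$, which is well-defined for every $\lambda \geq 0$ since $L$ is positive semidefinite. Next I would expand this inverse for small $\lambda$: using the Neumann series,
\[
(I + \lambda L)^{-1} = I - \lambda L + \lambda^2 L^2 - \cdots,
\]
so $\hat{P}_\lambda = P_0 - \lambda L P_0 + O(\lambda^2)$ in Frobenius norm.

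Substituting into $f(\lambda)$ and expanding the squared Frobenius norm using the standard inner product identity $\|A + B\|_F^2 = \|A\|_F^2 + 2\langle A, B\rangle_F + \|B\|_F^2$, I get
\[
f(\lambda) = \|P_0 - P^*\|_F^2 \;-\; 2\lambda\, \langle P_0 - P^*,\, L P_0\rangle_F \;+\; O(\lambda^2).
\]
By the hypothesis, the coefficient of the linear term is strictly positive, so $f'(0) = -2\langle P_0 - P^*, L P_0\rangle_F < 0$. Hence there exists $\lambda_0 > 0$ such that for all $\lambda \in (0, \lambda_0)$, $f(\lambda) < f(0)$, which is exactly the claimed inequality.

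The only step requiring mild care is the Neumann expansion: since $L$ is real symmetric with nonnegative eigenvalues, the series converges absolutely whenever $\lambda\|L\|_{op} < 1$, so the $O(\lambda^2)$ remainder in Frobenius norm is uniformly controlled on any compact sub-interval of $[0, 1/\|L\|_{op})$. This guarantees the Taylor expansion above is valid, and the rest is routine. I do not anticipate a significant obstacle; the result is essentially a consequence of the first-order optimality of $P_0$ being perturbed in the descent direction $-L P_0$, and the hypothesis is precisely the statement that this direction has positive inner product with the error $P_0 - P^*$.
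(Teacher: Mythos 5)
Your proposal is correct and follows essentially the same route as the paper's proof: both derive the closed form $\hat{P}_\lambda = (I+\lambda L)^{-1}P_0$, define $f(\lambda)=\|\hat{P}_\lambda-P^*\|_F^2$, and conclude from $f'(0)=-2\langle P_0-P^*,\,LP_0\rangle<0$ via a first-order argument. The only cosmetic difference is that you obtain the linear term through a Neumann series expansion of the resolvent, whereas the paper differentiates $(I+\lambda L)^{-1}$ directly with the chain rule; both yield the identical derivative at $\lambda=0$.
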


\begin{proof}
The objective function is convex and quadratic, and the unique minimizer admits the closed-form solution:
\[
\hat{P}_\lambda = (I + \lambda L)^{-1} P_0.
\]
Define the function \( f(\lambda) := \|\hat{P}_\lambda - P^*\|_F^2 \). Since \( \hat{P}_\lambda \) is smooth in \( \lambda \), \( f(\lambda) \) is also continuously differentiable. 
At \( \lambda = 0 \), we have \( \hat{P}_0 = P_0 \), so:
\[
f(0) = \|P_0 - P^*\|_F^2.
\]
To study how \( f(\lambda) \) changes near 0, compute the derivative using the chain rule. Let \( A_\lambda := (I + \lambda L)^{-1} \), so \( \hat{P}_\lambda = A_\lambda P_0 \). Then:

\begin{align*}
f'(\lambda) 
  &= \frac{d}{d\lambda} \|A_\lambda P_0 - P^*\|_F^2 \nonumber \\
  &= -2 \cdot \mathrm{Tr}\left[
       (A_\lambda P_0 - P^*)^\top A_\lambda L A_\lambda P_0
     \right].
\end{align*}
At \( \lambda = 0 \), we have \( A_0 = I \), so:
\begin{align*}
f'(0) 
  &= -2 \cdot \mathrm{Tr} \left[ (P_0 - P^*)^\top L P_0 \right] \\
  &= -2 \langle P_0 - P^*, L P_0 \rangle.
\end{align*}
By assumption, this quantity is strictly negative. Since \( f(\lambda) \) is differentiable and \( f'(0) < 0 \), there exists some \( \lambda > 0 \) sufficiently small such that:
\[
f(\lambda) < f(0),
\]
i.e.,
\[
\|\hat{P}_\lambda - P^*\|_F^2 < \|P_0 - P^*\|_F^2.
\]
\end{proof}

\begin{remark}
The condition $\langle P_0 - P^*, L P_0 \rangle > 0$ captures a meaningful alignment between the prediction error and the graph structure. It implies that the residuals in $P_0$ are not arbitrary but exhibit disagreement that is structured according to the graph Laplacian. In other words, when a prediction is incorrect, its neighbors tend to disagree in directions that the regularizer penalizes. This condition is satisfied when $P_0$ is informative yet imperfect—e.g., capturing coarse structure but missing fine details. In such cases, the graph enables error correction by propagating reliable information across neighbors. This formalizes the intuition that logit refinement is effective when initial predictions are reasonable and graph reflects meaningful similarity.
\end{remark}

\subsection{The GRIT Method}

\begin{figure*}[tb]
\centering
\includegraphics[width=0.90\textwidth]{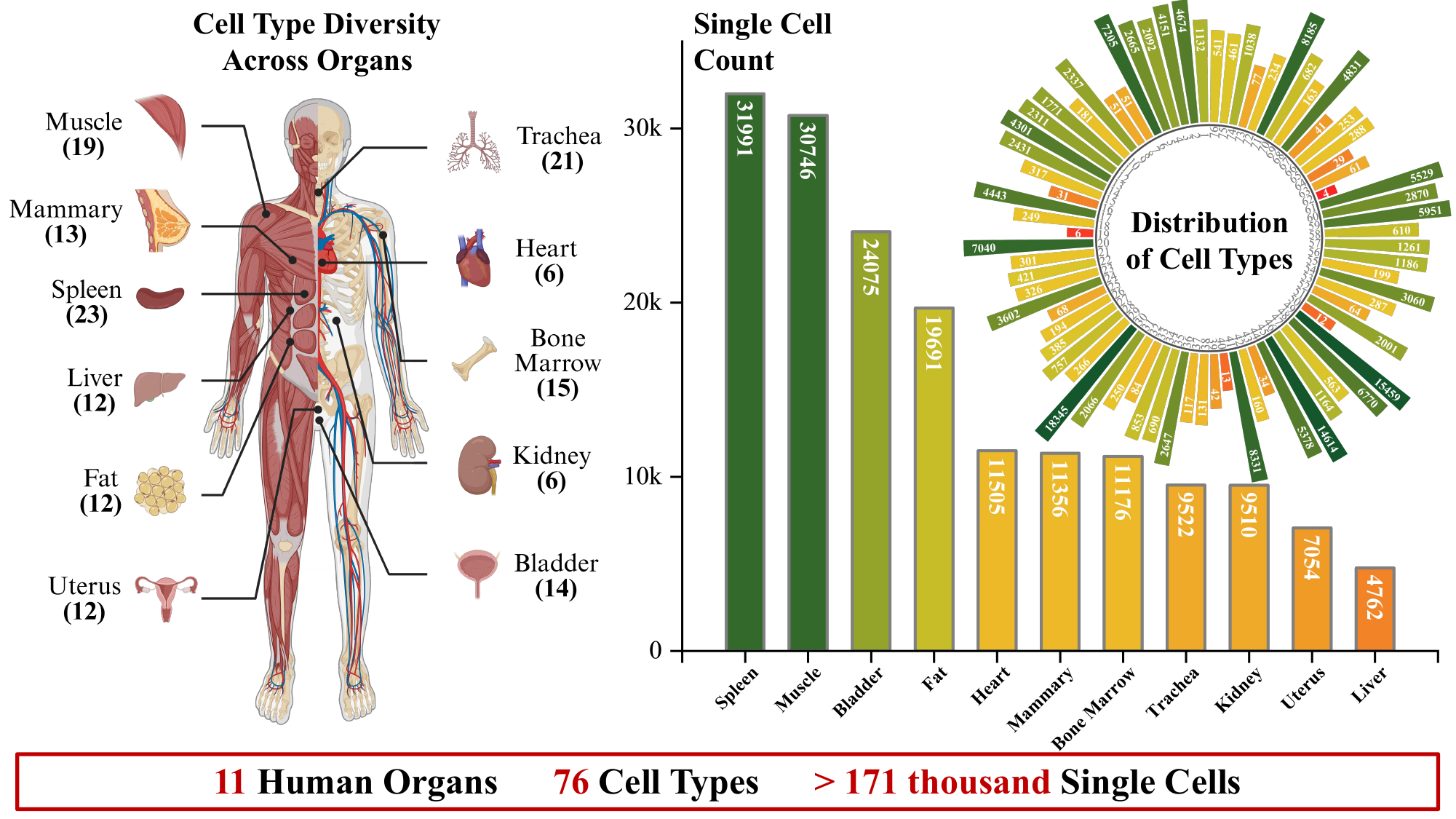}
\caption{Overview of the human scRNA-seq datasets used in our main experiment. They span 11 human organs, 76 annotated cell types, and over 171{,}000 single cells. The anatomical illustration summarizes cell type diversity across organs. The bar chart reports single-cell counts per organ. The circular plot visualizes the distribution of all cell types, indexed from 1 to 76 for clarity. Full cell type names corresponding to these indices are listed in Appendix.}
\label{fig:dataset}
\vspace{-1em}
\end{figure*}

\begin{table*}[t]
\centering
\caption{Performance comparison on zero-shot cell type annotation across 11 scRNA-seq datasets from the \textit{Tabula Sapiens} project. We evaluate LangCell, Majority Vote, Logit Average, and GRIT under the default hyperparameter setting, using Accuracy and Macro F1 (\%). GRIT consistently outperforms the baseline LangCell across all datasets. The average row (\textbf{Avg.}) summarizes overall performance, with gains shown as colored subscripts. }
\label{tab:main results}
\begin{tabular}{c|cccccccc}
\toprule
\multirow{2}{*}{\textbf{Dataset}} 
& \multicolumn{2}{c}{\textbf{LangCell}} 
& \multicolumn{2}{c}{\textbf{Majority Vote}}
& \multicolumn{2}{c}{\textbf{Logit Average}}
& \multicolumn{2}{c}{\textbf{GRIT}} \\
\cmidrule(lr){2-3} \cmidrule(lr){4-5} \cmidrule(lr){6-7} \cmidrule(lr){8-9}
& Accuracy & Macro F1
& Accuracy & Macro F1
& Accuracy & Macro F1
& Accuracy & Macro F1 \\
\midrule
\textbf{Bladder}     
& 75.78 & 52.57 
& 79.79 $\textcolor{red}{\uparrow}$ 
& 52.67 $\textcolor{red}{\uparrow}$ 
& 78.80 $\textcolor{red}{\uparrow}$ 
& 51.62 $\textcolor{gred}{\downarrow}$ 
& 79.36 $\textcolor{red}{\uparrow}$ 
& 53.32 $\textcolor{red}{\uparrow}$ \\
\textbf{Bone Marrow}       
& 54.49 & 36.76 
& 61.61 $\textcolor{red}{\uparrow}$ 
& 37.34 $\textcolor{red}{\uparrow}$ 
& 59.86 $\textcolor{red}{\uparrow}$ 
& 35.34 $\textcolor{gred}{\downarrow}$ 
& 59.14 $\textcolor{red}{\uparrow}$ 
& 37.59 $\textcolor{red}{\uparrow}$ \\
\textbf{Fat}     
& 35.81 & 32.70 
& 38.95 $\textcolor{red}{\uparrow}$ 
& 30.07 $\textcolor{gred}{\downarrow}$ 
& 39.24 $\textcolor{red}{\uparrow}$ 
& 29.22 $\textcolor{gred}{\downarrow}$ 
& 38.41 $\textcolor{red}{\uparrow}$ 
& 34.56 $\textcolor{red}{\uparrow}$ \\
\textbf{Heart}     
& 85.58 & 70.97 
& 88.16 $\textcolor{red}{\uparrow}$ 
& 71.73 $\textcolor{red}{\uparrow}$ 
& 89.40 $\textcolor{red}{\uparrow}$ 
& 71.98 $\textcolor{red}{\uparrow}$ 
& 88.18 $\textcolor{red}{\uparrow}$ 
& 73.32 $\textcolor{red}{\uparrow}$ \\
\textbf{Kidney}   
& 97.29 & 84.60 
& 93.85 $\textcolor{gred}{\downarrow}$ 
& 52.59 $\textcolor{gred}{\downarrow}$ 
& 93.55 $\textcolor{gred}{\downarrow}$ 
& 50.73 $\textcolor{gred}{\downarrow}$ 
& 98.20 $\textcolor{red}{\uparrow}$ 
& 85.03 $\textcolor{red}{\uparrow}$ \\
\textbf{Liver}   
& 75.20 & 52.23 
& 77.89 $\textcolor{red}{\uparrow}$ 
& 48.84 $\textcolor{gred}{\downarrow}$ 
& 77.99 $\textcolor{red}{\uparrow}$ 
& 48.27 $\textcolor{gred}{\downarrow}$ 
& 79.19 $\textcolor{red}{\uparrow}$ 
& 56.85 $\textcolor{red}{\uparrow}$ \\
\textbf{Mammary} 
& 78.79 & 50.07 
& 81.83 $\textcolor{red}{\uparrow}$ 
& 42.15 $\textcolor{gred}{\downarrow}$ 
& 82.08 $\textcolor{red}{\uparrow}$ 
& 46.45 $\textcolor{gred}{\downarrow}$ 
& 82.06 $\textcolor{red}{\uparrow}$ 
& 52.66 $\textcolor{red}{\uparrow}$ \\
\textbf{Muscle}    
& 57.12 & 31.35 
& 70.61 $\textcolor{red}{\uparrow}$ 
& 32.09 $\textcolor{red}{\uparrow}$ 
& 72.60 $\textcolor{red}{\uparrow}$ 
& 31.79 $\textcolor{red}{\uparrow}$ 
& 67.15 $\textcolor{red}{\uparrow}$ 
& 35.50 $\textcolor{red}{\uparrow}$ \\
\textbf{Spleen}  
& 48.63 & 30.40 
& 48.26 $\textcolor{gred}{\downarrow}$ 
& 26.36 $\textcolor{gred}{\downarrow}$ 
& 48.56 $\textcolor{gred}{\downarrow}$ 
& 25.77 $\textcolor{gred}{\downarrow}$ 
& 49.95 $\textcolor{red}{\uparrow}$ 
& 30.44 $\textcolor{red}{\uparrow}$ \\
\textbf{Trachea}     
& 53.60 & 39.34 
& 61.65 $\textcolor{red}{\uparrow}$ 
& 37.38 $\textcolor{gred}{\downarrow}$ 
& 61.36 $\textcolor{red}{\uparrow}$ 
& 37.25 $\textcolor{gred}{\downarrow}$ 
& 58.72 $\textcolor{red}{\uparrow}$ 
& 40.34 $\textcolor{red}{\uparrow}$ \\
\textbf{Uterus}     
& 70.71 & 44.07 
& 82.56 $\textcolor{red}{\uparrow}$ 
& 45.05 $\textcolor{red}{\uparrow}$ 
& 83.98 $\textcolor{red}{\uparrow}$ 
& 46.01 $\textcolor{red}{\uparrow}$ 
& 80.17 $\textcolor{red}{\uparrow}$ 
& 47.57 $\textcolor{red}{\uparrow}$ \\
\midrule
\cellcolor{LightBlue}\textbf{Avg.}       
& \cellcolor{LightBlue}\textbf{66.64} 
& \cellcolor{LightBlue}\textbf{47.73} 
& \cellcolor{LightBlue}\textbf{71.38} \uaa{4.74} 
& \cellcolor{LightBlue}\textbf{43.30} \daa{4.43} 
& \cellcolor{LightBlue}\textbf{71.58} \uaa{4.94} 
& \cellcolor{LightBlue}\textbf{43.13} \daa{4.60} 
& \cellcolor{LightBlue}\textbf{70.96} \uaa{4.32} 
& \cellcolor{LightBlue}\textbf{49.74} \uaa{2.01} \\
\bottomrule
\end{tabular}
\end{table*}

\begin{table*}[t]
\centering
\caption{Logit consistency before and after the GRIT refinement across 11 organs. GRIT consistently reduces logit inconsistency with respect to the underlying graph structure.}
\label{tab:logit_consistency}
\resizebox{\textwidth}{!}{%
\begin{tabular}{cccccccccccc}
\toprule
\textbf{Logit Consistency} & \textbf{Bladder} & \textbf{Bone Marrow} & \textbf{Fat} & \textbf{Heart} & \textbf{Kidney} & \textbf{Liver} & \textbf{Mammary} & \textbf{Muscle} & \textbf{Spleen} & \textbf{Trachea} & \textbf{Uterus} \\
\midrule
\textbf{$P_0^T L P_0$ ($k=15$)} & 60.54 & 25.10 & 40.09 & 35.85 & 44.41 & 15.58 & 29.32 & 81.48 & 58.48 & 17.94 & 17.67 \\
    \textbf{$\hat{P_{\lambda}}^T L \hat{P_{\lambda}}$ ($k=15$)} & 16.42 & 6.91 & 10.57 & 9.35 & 12.17 & 4.28 & 8.25 & 21.64 & 15.09 & 4.94 & 4.95 \\
\textbf{$P_0^T L P_0$ ($k=20$)} & 60.98 & 25.37 & 40.28 & 36.01 & 44.62 & 15.67 & 29.70 & 82.21 & 58.71 & 18.07 & 17.79 \\
    \textbf{$\hat{P_{\lambda}}^T L \hat{P_{\lambda}}$ ($k=20$)} & 16.75 & 7.10 & 10.75 & 9.51 & 12.44 & 4.35 & 8.51 & 22.12 & 15.34 & 5.04 & 5.05 \\
\bottomrule
\end{tabular}
}
\end{table*}

\begin{table*}[t]
\centering
\caption{Wall-clock time (minutes) required for GRIT graph regularization with $k$=15 and $k$=20 across the 11 datasets used in the main experiments, together with the number of cells (n) and the number of cell types (c) for each dataset.}
\label{tab:computational_cost}
\resizebox{\textwidth}{!}{%
\begin{tabular}{cccccccccccc}
\toprule
\textbf{-} & \textbf{Bladder} & \textbf{Bone Marrow} & \textbf{Fat} & \textbf{Heart} & \textbf{Kidney} & \textbf{Liver} & \textbf{Mammary} & \textbf{Muscle} & \textbf{Spleen} & \textbf{Trachea} & \textbf{Uterus} \\
\midrule
$n$ & 24, 075 & 11, 176 & 19, 691 & 11, 505 & 9, 510 & 4, 762 & 11, 356 & 30, 746 & 31, 991 & 9, 522 & 7, 054 \\
    $c$ & 14 & 15 & 12 & 6 & 6 & 12 & 13 & 19 & 23 & 21 & 12 \\
Clock-time ($k$=15) & 0.76 & 0.11 & 0.33 & 0.18 & 1.06 & 0.03 & 0.12 & 10.27 & 5.79 & 0.20 & 0.11 \\
    Clock-time ($k$=20) & 1.03 & 0.16 & 0.40 & 0.21 & 1.73 & 0.04 & 0.15 & 16.25 & 7.20 & 0.24 & 0.13 \\
\bottomrule
\end{tabular}
}
\end{table*}

We propose a three-stage framework GRIT for zero-shot cell type annotation in scRNA-seq data. Given an unlabeled dataset of $n$ cells with gene expression profiles $\{x_i\}_{i=1}^n$ and a set of $c$ cell type descriptions $\{t_j\}_{j=1}^c$, our method consists of: (1) obtaining initial logits via a CLIP-style model, (2) constructing a $k$-NN graph constructed in PCA space and (3) refining the initial logits using graph-regularized optimization (\autoref{fig:method}).

\paragraph{Step 1: Initial Prediction via CLIP-Style Model.}
We first adopt a CLIP-style model, LangCell, to obtain the initial prediction logits $P_0 \in \mathbb{R}^{n \times c}$, where each row corresponds to a cell and each column to a candidate cell type. For each cell $x_i, P_0(i) =  \mathrm{model}\left(x_i, \{t_j\}_{j=1}^c\right)$, where $\mathrm{model}(\cdot, \cdot)$ produces a probability distribution reflecting the alignment between the input cell and each cell type candidate. LangCell uses these logits for zero-shot prediction via: $g(x_i) = \arg\max_j \left\{ P_0(i) \right\}, \quad \forall i \in \{1, \dots, n\},$ enabling automatic cell type annotation without any task-specific training or fine-tuning. The logits $P_0$ serve as the input to our refinement procedure.

\paragraph{Step 2: Graph Construction.}
To model the relational structure among cells, we construct a $k$-NN graph based on a low-dimensional representation of the input data. Specifically, each cell $x_i$ is first preprocessed following standard scRNA-seq analysis procedures. The resulting gene expression matrix is standardized and projected onto a $d$-dimensional space using principal component analysis (PCA), yielding a reduced representation $x_i^{\mathrm{PCA}}$ for each cell. Using these PCA-reduced features, we build a $k$-NN graph $\mathcal{G} = (\mathcal{V}, \mathcal{E})$ by connecting each cell to its $k$ nearest neighbors in Euclidean space. The graph is symmetrized so that an undirected edge exists between cell $i$ and cell $j$ if either is among the $k$ nearest neighbors of the other. Let $A \in \mathbb{R}^{n \times n}$ denote the adjacency matrix of this graph, the unnormalized graph Laplacian is then defined as $L = D - A$, where $D$ is the diagonal degree matrix with $D_{ii} = \sum_j A_{ij}$. This Laplacian $L$ captures the local geometry of the dataset and is used in the subsequent logit refinement step.

\paragraph{Step 3: Graph-regularized Logit Refinement.}
After obtaining the initial logits $P_0$ and the $k$-NN graph represented by the adjacency matrix $A$ and its corresponding Laplacian $L$, we refine the logits by solving a graph-regularized optimization problem that promotes local consistency: $\hat{P_{\lambda}} = \arg\min_{P} \{ \|P - P_0\|_F^2 + \lambda\, \mathrm{Tr}(P^\top L P) \},$ where $\lambda > 0$ controls the strength of the regularization. This objective encourages the refined logits $P$ to remain close to the initial predictions $P_0$ while being smooth with respect to the graph structure encoded by $L$. The solution to this convex quadratic problem has a closed form and can be computed efficiently by solving a linear system: $\hat{P_{\lambda}} = (I + \lambda L)^{-1} P_0.$ GRIT then uses the refined logits to perform zero-shot cell type annotation: $h(x_i) = \arg\max_j \left\{ \hat{P}(i) \right\}$.

\section{Experiments}

\subsection{Datasets and Setup}
We compile 11 scRNA-seq datasets from the \textit{Tabula Sapiens} project~\cite{the2022tabula}, spanning organs including \textbf{Bladder}, \textbf{Bone Marrow}, \textbf{Fat}, \textbf{Heart}, \textbf{Kidney}, \textbf{Liver}, \textbf{Mammary}, \textbf{Muscle}, \textbf{Spleen}, \textbf{Trachea}, and \textbf{Uterus} (see \autoref{fig:dataset}). In total, the datasets encompass 76 cell types and 171,383 single cells. In addition, we collect two public peripheral blood mononuclear cell datasets (\textbf{PBMC10k} \cite{gayoso2022python} and \textbf{PBMC368k} \cite{zheng2017massively}), and a peripheral cortex dataset (\textbf{Peripheral Cortex} \cite{siletti2023transcriptomic}). These datasets cover 18 cell types and 33, 700 single cells. All datasets include cell type annotations curated by domain experts. Dataset statistics is detailed in Appendix.

We focus on the zero-shot cell type annotation task on these datasets and employ LangCell, a pretrained CLIP model adapted for scRNA-seq and text alignment, to generate the initial soft label logits $P_0$. Following standard preprocessing procedures in Scanpy \cite{wolf2018scanpy} (see Appendix), a PCA-based $k$ nearest neighbor ($k$-NN) graph is constructed for each dataset to capture cell similarity relationships. Unless otherwise stated, we use $\alpha=0.2$ for LangCell, 50 principal components for PCA, and $k=15$ for $k$-NN as our default hyperparameters. The initial logits $P_0$ are refined using GRIT with a default hyperparameter of $\lambda=1$. 

For comparison, we evaluate GRIT against LangCell and two intuitive baselines that directly combine the initial logits with the PCA-based $k$-NN graph. \textbf{(1) Majority Vote:} the predicted cell type is determined by the majority vote among the predictions of a cell and its neighbors $\arg\max_j
\sum_{i' \in \mathcal{N}_k(i)}
\mathbb{I}\!\left( g(x_{i'}) = j \right)$. \textbf{(2) Logit Average:} the logits of a cell and its neighbors are averaged to produce a refined logit, and the final prediction is obtained using the refined logit via $\arg\max_j
\frac{1}{|\mathcal{N}_k(i)|}
\sum_{i' \in \mathcal{N}_k(i)}
P_0(i')$. Here we denote $\mathcal{N}_k(i)$ the set of $k$ nearest neighbors of cell $i$ in the PCA-based $k$-NN graph (including cell $i$ itself). Performance of all methods is assessed using accuracy and macro F1 score (see Appendix), evaluating both overall and class-balanced predictive performance. 

\subsection{Main Results}
\label{sec:main results}

Constructing a $k$-NN graph on principal components is a standard step in scRNA-seq analysis pipelines \cite{guo2003knn, zhang2017learning}. It is common to reduce gene expression profiles to 50 principal components and set $k$ to 15 or 20 when building the $k$-NN graph. For example, the widely used toolkits Scanpy \cite{wolf2018scanpy} and Seurat \cite{pereira2021asc} adopt 50 principal components by default, with $k=15$ and $k=20$, respectively. To align with established practice among domain experts, we evaluate GRIT under these commonly used configurations.

As shown in \autoref{tab:main results}, GRIT consistently improves performance over the LangCell baseline across all 11 organ datasets, with accuracy improvements up to 10.03\% (Muscle) and macro F1 improvements up to 4.62\% (Liver) (see Appendix table for results with $k=20$). Moreover, the results under $k=15$ and $k=20$ are similar, indicating strong robustness to the choice of $k$. \autoref{tab:logit_consistency} presents the logit consistency before and after GRIT refinement for both $k$ settings, quantified by $P_0^T L P_0$ and $\hat{P}_{\lambda}^T L \hat{P}_{\lambda}$, respectively. GRIT consistently reduces logit inconsistency with respect to the graph structure, resulting in refined logits that better align with the $k$-NN graph used for cell type classification. 

These findings support our central hypothesis that exploiting the $k$-NN graph structure—routinely employed by practitioners for cell type annotation—enables meaningful refinement of the initial prediction logits produced by decent models, resulting in significant and stable improvements in predictive performance.

The refinement step solves a sparse linear system with a fixed coefficient matrix for each class using scipy.sparse.linalg.spsolve. Detailed computational cost are reported in \autoref{tab:computational_cost}. In particular, on the most computationally demanding dataset (Muscle, $n$=30,746, $c$=19), the default GRIT refinement completes in 10.27 ($k$=15) and 16.25 ($k=20$) minutes of wall-clock time.


We compare GRIT with two intuitive methods that directly combine the initial logits with the $k$-NN graph: Majority Vote and Logit Average. Although these approaches can increase prediction accuracy, they reduce the macro F1 score in more than half of all evaluated cases. Specifically, Majority Vote reduces the macro F1 on 6 out of 11 datasets for $k=15$ and on 8 out of 11 datasets for $k=20$. Logit Average reduces the macro F1 on 8 out of 11 datasets for $k=15$ and on 9 out of 11 datasets for $k=20$. This pattern indicates that such approaches tend to favor aggregate accuracy at the expense of class-wise performance balance, resulting in less balanced predictions. Notably, in the Kidney dataset, where the initial LangCell predictions are already strong (97.29\% accuracy and 84.60\% macro F1), both methods lead to substantial performance degradation. Majority Vote reduces accuracy to 93.85\% (-3.44\%) and macro F1 to 52.59\% (-32.01\%) for $k=15$, and to 93.88\% (-3.41\%) accuracy and 52.56\% (-32.04\%) macro F1 for $k=20$. Logit Averaging yields 93.55\% (-3.74\%) accuracy and 50.73\% (-33.87\%) macro F1 for $k=15$, and 93.62\% (-3.67\%) accuracy and 51.29\% (-33.31\%) macro F1 for $k=20$. This observation suggests that these heuristics are unstable and can be detrimental when the initial predictions are already of high quality.

\subsection{Empirical Analysis of  $\lambda$}
In the Methodology section, we analyzed the behavior of the function $f(\lambda)$ which quantifies the quality of the refined logits $\hat{P}_{\lambda}$. Under a mild condition on the initial logits $P_0$, we showed that $f(\lambda)$ decreases over a small right-hand interval near $\lambda = 0$. Note that by definition, smaller values of $f(\lambda)$ correspond to higher-quality refined logits, whereas larger values indicate poorer-quality refinement. Motivated by this insight, we conduct a systematic sweep over $\lambda$ values near zero. Specifically, we evaluate $\lambda \in \{10^{-5}, 10^{-4}, 10^{-3}, 10^{-2}, \allowbreak\ 0.1,  0.2, 0.5, 1, 2, 5, 10, 20, 30, 40, 50, 60, 70, 80, 90, 100 \}$. \autoref{fig:lambda} (also see Appendix) illustrates how the prediction performance varies with $\lambda$ across all 11 human organ datasets, both individually and on average. Both the individual and average trend align with our analysis: performance improves as $\lambda$ increases from zero. Empirically, we find that moderate values in the range $\lambda \in (0, 5)$ enhance the overall performance. We adopt $\lambda = 1$ as the default. 

\begin{figure*}[tb]
\centering
\includegraphics[width=0.99\textwidth]{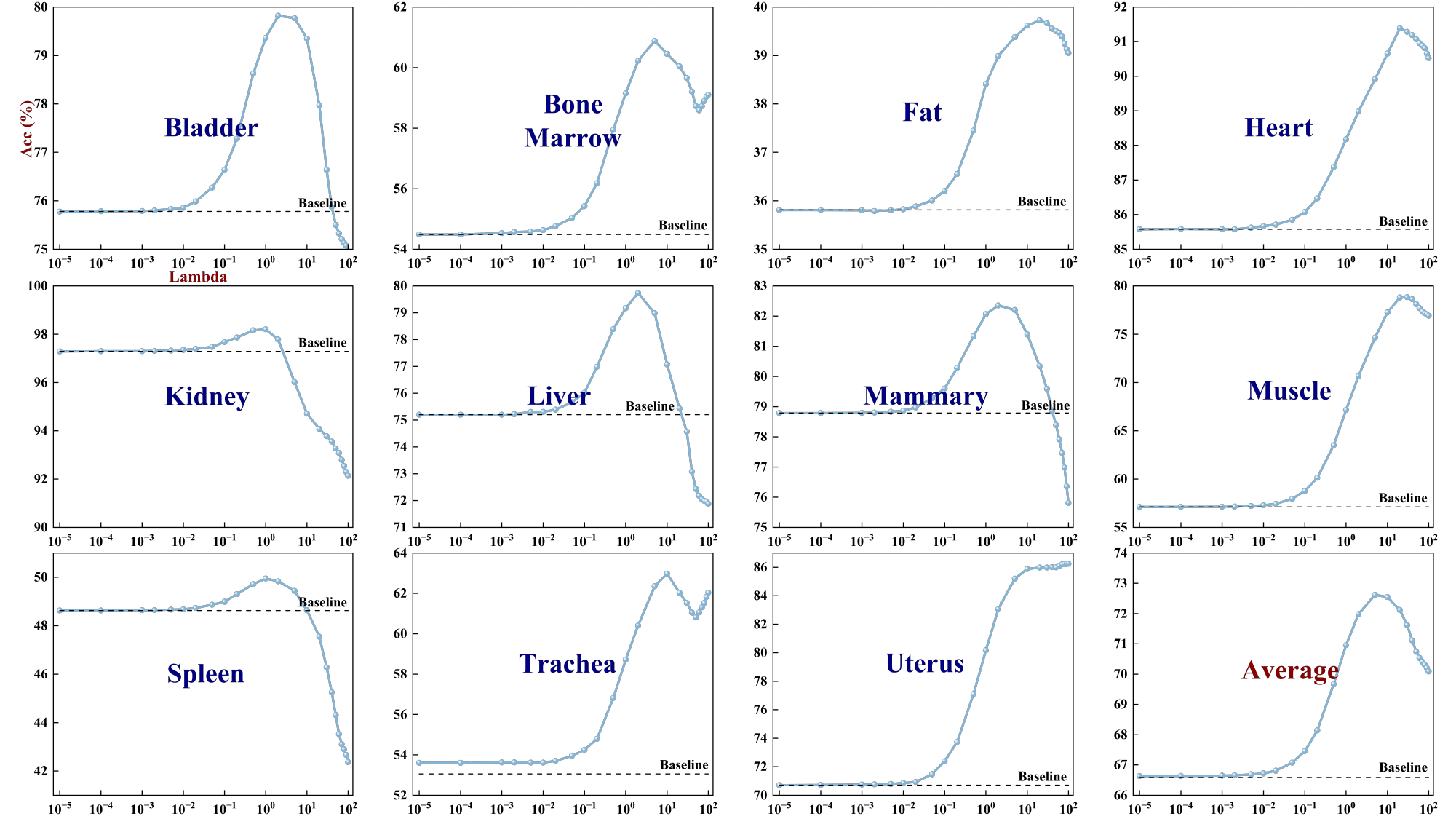}
\caption{Investigation of GRIT performance in the right-hand neighborhood of $\lambda = 0$ across all 11 organs from the \textit{Tabula Sapiens} project and their average. The x-axis denotes $\lambda$ values, and the y-axis reports logit performance measured by accuracy. Dashed lines indicate baseline accuracies achieved by LangCell.} 
\label{fig:lambda}
\end{figure*}

\begin{table}[t]
\centering
\caption{Performance comparison between LangCell and GRIT under varying $\alpha$ values and on additional datasets. We report Accuracy and Macro F1 (\%). Arrows indicate relative performance of GRIT compared to LangCell.}
\label{tab:grit_ablation_mainstyle}
\setlength{\tabcolsep}{4pt}  
\begin{tabular}{c|cccc}
\toprule
\multirow{2}{*}{\textbf{Setting}} 
& \multicolumn{2}{c}{\textbf{LangCell}} 
& \multicolumn{2}{c}{\textbf{GRIT}} \\
\cmidrule(lr){2-3} \cmidrule(lr){4-5}
& Accuracy & Macro F1
& Accuracy & Macro F1 \\
\midrule
$\alpha = 0.1$
& 63.60 & 46.01
& 66.87 $\textcolor{red}{\uparrow}$ 
& 47.66 $\textcolor{red}{\uparrow}$ \\

$\alpha = 0.2$
& 66.59 & 47.73
& 70.96 $\textcolor{red}{\uparrow}$ 
& 49.74 $\textcolor{red}{\uparrow}$ \\

$\alpha = 0.3$
& 67.39 & 48.32
& 71.48 $\textcolor{red}{\uparrow}$ 
& 49.78 $\textcolor{red}{\uparrow}$ \\

$\alpha = 0.5$
& 67.05 & 48.27
& 70.65 $\textcolor{red}{\uparrow}$ 
& 49.47 $\textcolor{red}{\uparrow}$ \\

\midrule
PBMC10k
& 86.52 & 83.91
& 88.43 $\textcolor{red}{\uparrow}$ 
& 87.57 $\textcolor{red}{\uparrow}$ \\

PBMC368k
& 84.86 & 83.17
& 87.41 $\textcolor{red}{\uparrow}$ 
& 86.16 $\textcolor{red}{\uparrow}$ \\

Peripheral Cortex
& 98.01 & 72.87
& 98.41 $\textcolor{red}{\uparrow}$ 
& 73.27 $\textcolor{red}{\uparrow}$ \\

\bottomrule
\end{tabular}
\end{table}

\begin{figure*}[t]
\centering
\includegraphics[width=0.80 \textwidth]{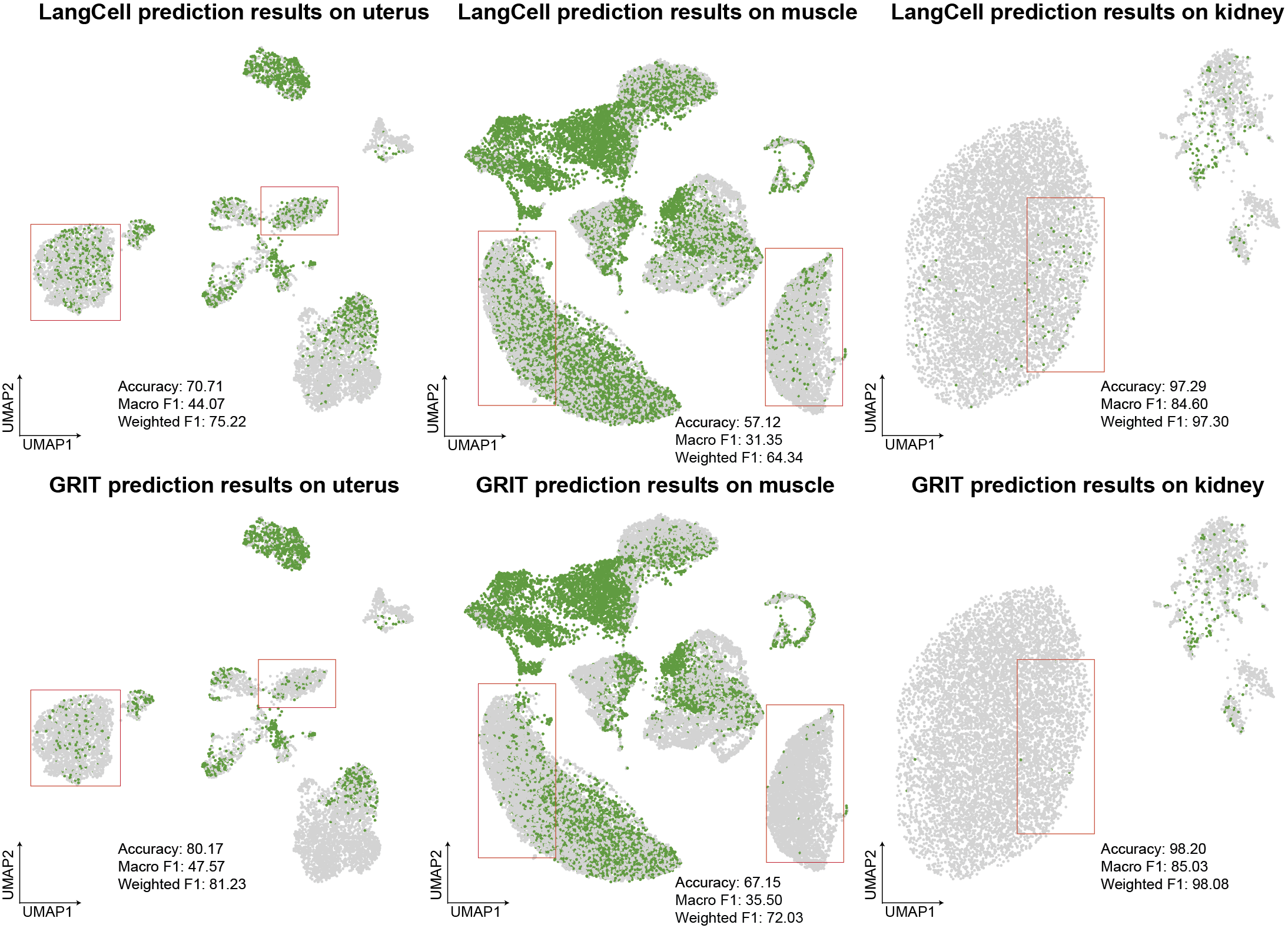}
\caption{UMAP visualization of scRNA-seq data of organ uterus (left), muscle (middle), and kidney (right). Each point represents a cell, colored by prediction correctness: gray indicates correct predictions, and green indicates incorrect ones. For each organ, the top panel shows LangCell zero-shot predictions, the bottom panel shows refined predictions from GRIT. Orange boxes indicate representative regions where GRIT provides clear improvements.
} 
\label{fig:case study}
\vspace{-1em}
\end{figure*}

\begin{figure*}[t]
\centering
\includegraphics[width=0.996\textwidth]{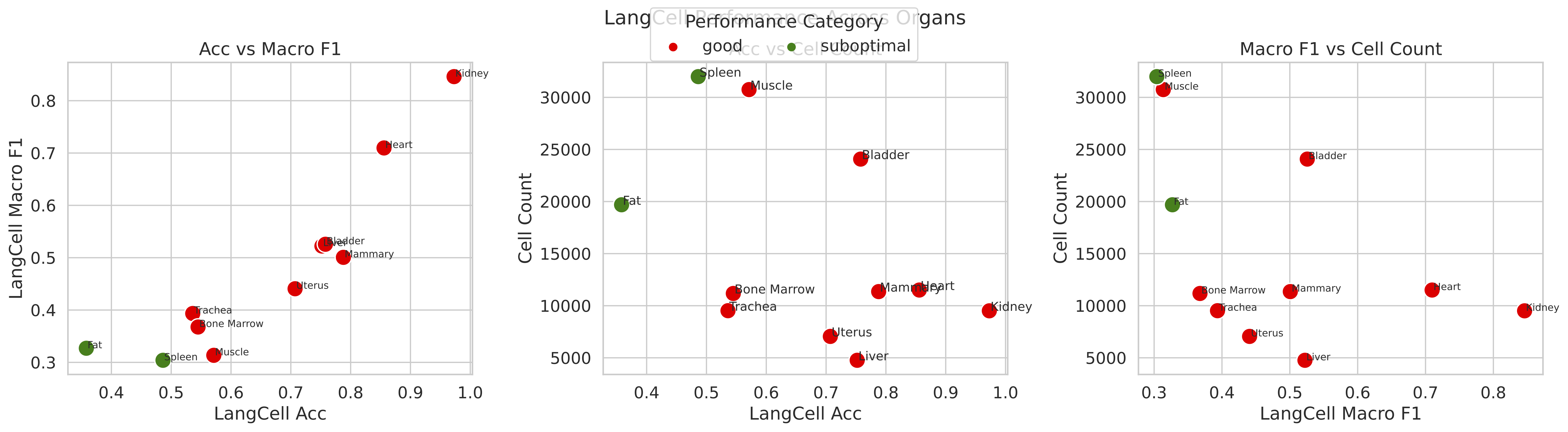}
\vspace{-1em}
\caption{GRIT performance are plotted against LangCell logit initial accuracy, macro F1, and total cell count for all 11 scRNA-seq datasets used in our main experiment.} 
\label{Figure_case_study}
\end{figure*}

\subsection{Robustness Across Different LangCell Hyperparameters and More Datasets}
Our main experiments, along with empirical analysis in the low-$\lambda$ regime, suggest that LangCell's zero-shot predictions are already sufficiently aligned with the graph structure for GRIT to provide effective improvements. To ensure a comprehensive evaluation, we further investigate a broader range of LangCell hyperparameter settings and datasets from studies in addition to the \textit{Tabula Sapiens} project to assess generalizability across diverse input conditions.

As reported in LangCell, the hyperparameter $\alpha \in \{0.1, 0.2, 0.3, 0.5\}$ yields competitive results, with $\alpha = 0.2$ recommended as the default (see Appendix tables for details). We adopt $\alpha = 0.2$ for baseline comparisons, but also evaluate GRIT under all four $\alpha$ settings to assess its robustness across a range of initializations. As shown in \autoref{tab:grit_ablation_mainstyle}, GRIT consistently improves zero-shot annotation performance, measured by accuracy and macro F1, across all $\alpha$ values. These results underscore two key observations: (1) current zero-shot models like LangCell are capable of producing logits that already reflect a latent task-relevant structure, which can be effectively enhanced through graph-based refinement. (2) GRIT is robust to variations in initial logit quality, demonstrating consistent gains across a spectrum of strong configurations. \autoref{tab:grit_ablation_mainstyle} also presents GRIT's performance on additional datasets including PBMC10k, PBMC368k, and Peripheral Cortex. GRIT consistently improves prediction performance across these datasets, demonstrating strong robustness to diverse data sources with varying levels of data noise.

\subsection{Visualization and Case Study}
We present three representative cases that exhibit clear performance improvements with GRIT, in order to illustrate the effect of the proposed refinement (\autoref{fig:case study}). In the UMAP visualizations, gray points denote correctly predicted cells, while green points indicate incorrectly predicted ones. As shown, the kidney dataset exhibits high initial prediction performance with LangCell, whereas the uterus and muscle datasets exhibit comparatively lower baseline accuracy. Nonetheless, across all cases, as long as a subset of correctly predicted cells exists within a local neighborhood, GRIT can propagate these signals to pull back mislabeled cells.

However, not all datasets benefit equally: in fat and spleen, GRIT slightly reduces macro F1 (see $k=20$ results in Appendix). To further explore this trend, we analyze how GRIT improvement varies across various initial prediction qualities and dataset sizes. The improvement is visualized using color, where red indicates a significant performance gain, and green denotes a marginal improvement. Specifically, we examine initial accuracy, macro F1, and total cell count. As shown in \autoref{Figure_case_study}, GRIT tends to be more effective when the initial accuracy and macro F1 are relatively high. Additionally, when the initial performance is poor, large cell count number can degrade results. This observation is intuitive: reliable initial predictions ensure that neighbors provide trustworthy information, and a greater number of neighbors increases the likelihood of correcting noisy predictions through refinement. For noisy initial logits, the same propagation may reinforce incorrect patterns.

It is important to note that GRIT is not designed to achieve 100\% accuracy. Rather, similar to other inference-time refinement methods \cite{qian2023intra, liu2025kpl}, it seeks to leverage complementary structural information in the data to bring the initial logits closer to their achievable performance limit. In our case, this is accomplished by promoting local consistency and semantic alignment with the $k$-NN graph. While the ultimate performance is bounded by the quality of the initial predictions, GRIT, like other inference-time refinement methods, introduces a principled and effective post hoc procedure that consistently yields measurable improvements.

\section{Limitations and Ethical Considerations}
All experiments in this work are conducted on publicly available and de-identified datasets that were released for research use. We do not collect new data, nor do we have access to any personally identifiable information. Informed consent was obtained by the original data collectors, and no additional consent procedures are required for this study. Accordingly, this study does not involve direct interaction with human participants.

\section{Conclusion}
We present GRIT, a biologically grounded, training-free refinement method for improving zero-shot cell type annotation in scRNA-seq data. GRIT is designed to be as a principled, plug-in post-processing step that can be applied to any pretrained model producing cell-level logits. While effective, GRIT assumes a reliable graph and reasonably accurate initial logits. Future works include developing adaptive $\lambda$ selection strategies and extending GRIT to multi-omics.




\section{GenAI Disclosure}
Generative AI tools were used in a limited capacity during the preparation of this manuscript, primarily for language polishing and minor editorial assistance. All methodological designs, algorithmic formulations, experimental analyses, and conclusions were conceived, implemented, and verified by the authors. The use of generative AI did not influence the reported results.

\bibliographystyle{ACM-Reference-Format}
\bibliography{references}

\end{document}